\begin{document}

\newtheorem{theorem}{Theorem}
\newtheorem{lemma}{Lemma}
\newtheorem{definition}{Definition}
\newtheorem{corollary}{Corollary}
\newtheorem{proposition}{Proposition}

\title{Doppler Invariant CNN for Signal Classification}


\author{
\begin{tabular}[t]{c@{\extracolsep{8em}}c} 
Avi~Bagchi  &  Dwight~Hutchenson \\
\textit{University of Pennsylvania} & \textit{MIT Lincoln Laboratory} \\ 
Philadelphia, PA, USA & Lexington, MA, USA \\
\end{tabular}

\thanks{DISTRIBUTION STATEMENT A. Approved for public release. Distribution is unlimited.
This material is based upon work supported by the Department of the Air Force under Air Force Contract No. FA8702-15-D-0001 or FA8702-25-D-B002. Any opinions, findings, conclusions or recommendations expressed in this material are those of the author(s) and do not necessarily reflect the views of the Department of the Air Force.
© 2025 Massachusetts Institute of Technology.
Delivered to the U.S. Government with Unlimited Rights, as defined in DFARS Part 252.227-7013 or 7014 (Feb 2014). Notwithstanding any copyright notice, U.S. Government rights in this work are defined by DFARS 252.227-7013 or DFARS 252.227-7014 as detailed above. Use of this work other than as specifically authorized by the U.S. Government may violate any copyrights that exist in this work.
}
}

\markboth{}%
{Shell \MakeLowercase{\textit{et al.}}: A Sample Article Using IEEEtran.cls for IEEE Journals}

\IEEEpubid{}


\maketitle

\begin{abstract}
Radio spectrum monitoring in contested environments motivates the need for reliable automatic signal classification technology. Prior work highlights deep learning as a promising approach, but existing models depend on brute-force Doppler augmentation to achieve real-world generalization, which undermines both training efficiency and interpretability. In this paper, we propose a convolutional neural network (CNN) architecture with complex-valued layers that exploits convolutional shift equivariance in the frequency domain. To establish provable frequency bin shift invariance, we use adaptive polyphase sampling (APS) as pooling layers followed by a global average pooling layer at the end of the network. Using a synthetic dataset of common interference signals, experimental results demonstrate that unlike a vanilla CNN, our model maintains consistent classification accuracy with and without random Doppler shifts despite being trained on no Doppler-shifted examples. Overall, our method establishes an invariance-driven framework for signal classification that offers provable robustness against real-world effects. 
\end{abstract}

\begin{IEEEkeywords}
Signal classification, deep learning
\end{IEEEkeywords}

\section{Introduction}
\IEEEPARstart{R}{adio} spectrum monitoring is a crucial aspect of defense operations in wireless communication systems \cite{rml22}. In crowded and contested environments, however, there is often a lack of knowledge of the types of signals present, severely limiting the ability to counter the interference effects \cite{surveyamc}. Automatic signal classification technology addresses this challenge, attempting to classify unknown signals from the received samples. In the pre-machine learning era, research proposed extracting a small set of features to distinguish commonly seen signal types \cite{overtheair}. Such techniques were shown to be effective in modulation recognition where higher-order moments reveal differences in constellation shapes \cite{4thorder} . Approaches centered on feature selection, however, do not scale to real-world environments that contain a much larger set of signal types (e.g., interference signals) and potential channel disturbances \cite{rml22,revolution}.   

Deep learning algorithms provide a promising alternative for signal classification as they  implicitly learn a large amount of features from raw, complex baseband data and thus, distinguish between far more signal types even at low signal-to-noise ratio (SNR) \cite{rml22,overtheair}. The RADIOML synthetic dataset generator \cite{radiomldata} has driven academic and industrial research in this domain, leading to recurrent neural network, generative adversarial neural network, and auto-encoder implementations to classify signals \cite{rml22}. One successful model is the convolutional neural network (CNN), which achieves high classification accuracy even at low SNRs and under a variety of realistic channel conditions \cite{overtheair}. 

With any deep learning approach, however, there are two major challenges: generalization and interpretability. To achieve the former, prior work trains a model on a massive dataset \cite{revolution} that captures the wide range of real-world disturbances \cite{rml22}. Yet even after surmounting this computational challenge, the generalization gap for real-world applications remains unclear. Interpretability is challenging because it is unclear how many training examples with channel-induced perturbations are required for the model to learn robustness to such transformations. 

In this work, we harness the learning capability of CNNs for signal classification while also bridging the generalization and interpretability gaps. In particular, we construct and exploit an invariance---the property that the model output remains the same even after a transformation to its input \cite{cohen2016group}. We target the transformation induced by a Doppler shift, as such shifts often confuse signal classification models in real-world applications \cite{fusion}. Our contribution is as follows:

\begin{enumerate}
    \item We adapt the RADIOML22 CNN architecture \cite{rml22} to construct an interference signal classifier with provable Doppler-shift invariance, thereby improving both generalizability and interpretability. 
    \item Prior work \cite{rml22, overtheair,fusion} applies Doppler shifts to an arbitrary number of training samples, inflating the dataset size. Our training dataset remains small as it contains no Doppler-shifted examples.
\end{enumerate}

\section{System Model}

The complex baseband representation of a radio frequency (RF) signal can be expressed in terms of in-phase and quadrature components, representing the cosine and sine projections respectively. Concretely, the baseband signal is $x(t)=I(t)+jQ(t)$ while the passband signal is modulated onto a higher frequency at $x_p(t)=\Re (x(t)e^{j2\pi f_c t})$. Taking discrete-time samples of $x(t)$ yields $x[n]=I[n]+jQ[n]$, providing us with IQ samples \cite{rml22}. 

Let $\Omega$ be the set of signal types. Following the data dimensions of RADIOML22  \cite{rml22}, we generate IQ samples for 2000 frames with each frame containing 128 samples at various SNR values.    

$$\{(\omega, \rho) : \omega \in \Omega , \rho \in \{-20,-18\ldots 18,20\}\}$$

$$(\omega, \rho) \mapsto X_{\omega,\rho} \in \mathbb{R}^{2000 \times 2 \times 128}$$

To simulate a contested environment, we create a synthetic dataset with common interference signals: $\Omega=\{\mathrm{tone},\mathrm{hopping\_tone},\mathrm{chirp},\mathrm{noise},\mathrm{bpsk},\mathrm{qpsk},\mathrm{8psk}\}$ (Table \ref{tab:signal_definitions}). All signals are generated in baseband with the receiver tuned to $f_c=0$. Furthermore, the signals are modeled as being received by a receiver that was not synchronized in time, frequency, or phase with the transmitter.  

\begin{table}[t]
\centering
\caption{Signal Type Definitions}
\label{tab:signal_definitions}
\begin{tabular}{p{1.9cm} p{5.9cm}}
\hline
\textbf{Signal Type} & \textbf{Definition} \\
\hline

Tone &
Constant amplitude, continuous wave (CW) at a constant frequency within Nyquist sampling range. \\

Hopping Tone &
Constant–amplitude CW with stepped frequencies. Step length is constant within each frame but varies across frames. All of the frequency steps across the dataset are within the Nyquist sampling range. \\

Chirp &
Frequency that varies linearly with time, between a minimum and maximum frequency which both lie within the Nyquist sampling range. Once the maximum frequency is reached, the signal wraps back to the minimum frequency. The rate of frequency change is constant within a single frame, but varies across all frames in the dataset. \\

Noise &
Partial-band Gaussian noise. Created by passing white Gaussian noise through a bandpass filter with a passband entirely within the Nyquist sampling range. Passband frequencies are constant within a single frame, but vary across all frames. \\

M-PSK &
M-ary phase shift keying modulated signal with M=2,4, or 8 (i.e. BPSK, QPSK, or 8PSK). Symbol durations and modulation order are constant within a single frame, but vary across all frames in the dataset. The symbol durations are chosen such that the bandwidth of the signal does not extend beyond the Nyquist sampling range. \\
\hline
\end{tabular}
\end{table}

The goal is to train a CNN $f_{\theta}: \mathbb{R}^{2 \times 128} \to \Omega$. Of particular interest in this study is how transformations to the input affect the output of $f_{\theta}$. Following \cite{cohen2016group}, for transformations $T$ and $T'$, $f_{\theta}$ is \emph{equivariant} to $T$ if $$f_{\theta}(T(x))=T'(f_{\theta}(x))$$ where $T$ and $T'$ need not be distinct. Furthermore, $f_{\theta}$ is \emph{invariant} to a transformation $T$  if $$f_{\theta}(T(x))=f_{\theta}(x) $$.

The goal of this paper is to adapt the CNN architecture to construct invariance to a Doppler transformation, which is a change in frequency. A Doppler shift in the time domain is represented as:
$$x_d(t) = x(t)\, e^{j 2 \pi f_d t}$$

where $f_d$ is the frequency shift in baseband.

Let the DFT have $N$ frequency bins with spacing $\Delta f = f_s/N$, where $f_s$ is the sampling rate. A Doppler shift $f_d$ that is an integer multiple of $\Delta f$ produces a circular shift of $m$ bins,
\[
m = \frac{f_d}{\Delta f}.
\]
The DFT of the Doppler-shifted signal is therefore
\[
X_d[k] = X[(k - m) \bmod N].
\]

\section{Proposed Technique}

For an input in the time domain, convolutional layers are inherently equivariant to shifts in time \cite{lecun, cohen2016group}. To construct equivariance in frequency shifts, we must first map the input into the frequency domain with a Fast Fourier Transform (FFT). To handle feeding complex valued frequencies into the network, we adapt the convolutional layer and activation function to handle complex values \cite{survey} (Table \ref{tab:model_architecture}). Let $x[n] = x_r[n] + jx_i[n]$ be a complex-valued input and define $w_r$ and $w_i$ as real-valued weight matrices.

Following \cite{complexconv}, the complex 1-D convolution is defined as 
\[
\begin{aligned}
\Re\{\mathrm{ComplexConv}(x[n])\}
    &= \mathrm{Conv1D}(x_r, w_r)
       - \mathrm{Conv1D}(x_i, w_i), \\
\Im\{\mathrm{ComplexConv}(x[n])\}
    &= \mathrm{Conv1D}(x_r, w_i)
       + \mathrm{Conv1D}(x_i, w_r).
\end{aligned}
\]
 
Following \cite{deepcomplex,complexrelu}, the complex activation is applied elementwise as 
\[
\mathrm{ComplexReLU}(x[n])
    = \mathrm{ReLU}(x_r[n]) 
      + j\,\mathrm{ReLU}(x_i[n]).
\]
 
Following \cite{cvldsp}, the complex linear layer is defined as 
\[
\begin{aligned}
\Re\{\mathrm{ComplexLinear}(x[n])\}
    &= \mathrm{Linear}(x_r[n], w_r)
       - \mathrm{Linear}(x_i[n], w_i), \\[0.2em]
\Im\{\mathrm{ComplexLinear}(x[n])\}
    &= \mathrm{Linear}(x_r[n], w_i)
       + \mathrm{Linear}(x_i[n], w_r).
\end{aligned}
\]

Pooling layers follow each convolutional layer in a CNN and execute downsampling \cite{alexnet}. Recent research has shown that these layers can disrupt the aforementioned equivariance as small shifts can yield significantly different downsampled values and thus, final classification \cite{chaman, zhang}. In Algorithm~\ref{alg:complex_apspool}, we follow the approach of \cite{chaman} by constructing a pooling layer---adaptive polyphase sampling (APS)---that preserves shift equivariance (see Fig.~\ref{fig:redblue} for intuition).

\begin{algorithm}[H]
\caption{ComplexAPS1d}
\label{alg:complex_apspool}
\begin{algorithmic}[1]
\REQUIRE Input tensor $x \in \mathbb{C}^{B \times C \times L}$, stride $s$
\ENSURE Output tensor $y \in \mathbb{C}^{B \times C \times \lceil L/s \rceil}$
\STATE $\mathit{out\_arr} \gets [\,]$
\STATE $\mathit{out\_norm\_arr} \gets [\,]$
\FOR{$i = 0$ to $s-1$}
    \STATE $x^{(i)} \gets x[:, :, i::s]$ \COMMENT{Polyphase component starting at index $i$}
    \STATE Append $x^{(i)}$ to $\mathit{out\_arr}$
    \STATE $n^{(i)} \gets \|x^{(i)}\|_2$ \COMMENT{$\ell_2$ norm over all elements}
    \STATE Append $n^{(i)}$ to $\mathit{out\_norm\_arr}$
\ENDFOR
\STATE $max\_index \gets \arg\max(\mathit{out\_norm\_arr})$
\RETURN $\mathit{out\_arr}[max\_index]$
\end{algorithmic}
\end{algorithm}

\begin{figure}[h!]
    \centering
    \includegraphics[width=0.5\textwidth]{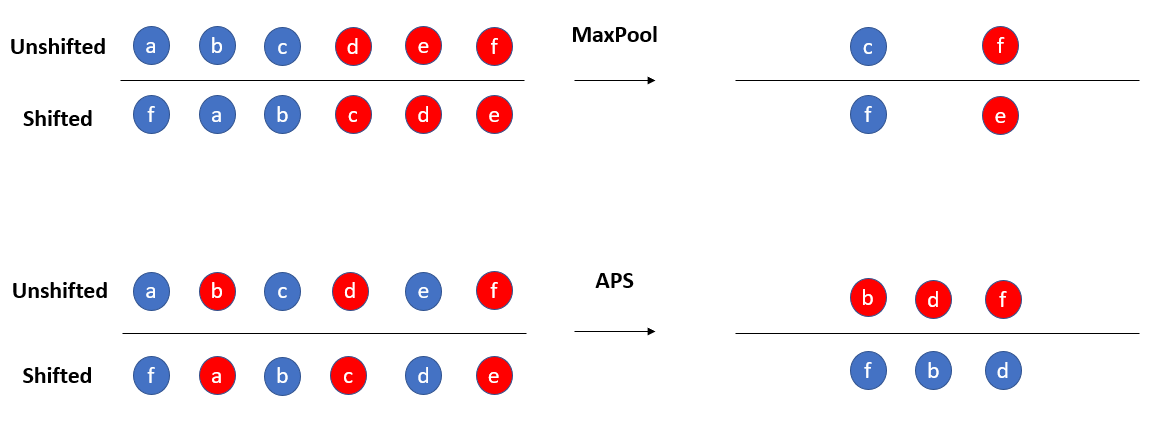}
    \caption{MaxPool (kernel=stride=3) vs APS (stride=2) for frequencies $a<b<c<d<e<f$. The MaxPool operation yields two distinct sets before and after a one bin shift while Algorithm \ref{alg:complex_apspool} yields the same set regardless of the shift.}
    \label{fig:redblue}
\end{figure}

Harnessing the equivariance of the earlier layers, we apply a global average pool in our last layer, ensuring that the final classification will be the same regardless of any shift (Table \ref{tab:model_architecture}). We can now assert that our architecture is invariant to a pure Doppler shift (please refer to the Appendix \ref{app:proof} for a formal proof). 

\begin{table*}[t]
\centering
\caption{Doppler Invariant CNN Architecture}
\label{tab:model_architecture}

\resizebox{\textwidth}{!}{
\begin{tabular}{|l|c|c|c|c|c|c|c|c|c|c|}
    \hline 
    \textbf{Description} & \textbf{Input} & 
    \textbf{CC1} & \textbf{APS1} & \textbf{CC2} & \textbf{APS2} & 
    \textbf{CC3} & \textbf{APS3} & \textbf{GP/Flatten} & 
    \textbf{FC/Softmax} \\
    \hline
    \textbf{Output Dim.} & 
    $1\times p^*$ &
    $64\times p^*$ &
    $64\times \left\lceil \frac{p^*}{s} \right\rceil$ &
    $32\times\left\lceil \frac{p^*}{s} \right\rceil$ &
    $32\times \left\lceil \frac{p^*}{s^2} \right\rceil$ &
    $16\times  \left\lceil \frac{p^*}{s^2} \right\rceil$ &
    $16\times  \left\lceil \frac{p^*}{s^3} \right\rceil$ &
    16 & 7 \\
    \hline 
    \textbf{Model} & \multicolumn{9}{p{18cm}|}{
        \textbf{Before entering the network:}
        \begin{itemize}
            \item Padding of size $p$ is applied to both sides of each IQ sample (length 128), giving $p^* = 128 + 2p$.
            \item FFT is applied to all IQ samples; the full complex values are fed into the network.
        \end{itemize}

        \textbf{Complex Convolution (CC) layers:}
        \begin{itemize}
            \item Implemented with two real Conv1D layers.
            \item Conv1D parameters: left pad = 1, right pad = 2, kernel size = 4, stride = 1.
            \item Includes ReLU activation.
        \end{itemize}

        \textbf{Adaptive Polyphase Sampling (APS)} layers use stride $s$.

        \textbf{Global Average Pooling (GP)} is followed by flattening.

        \textbf{Fully Connected (FC)} layer produces a 7-class softmax output.
     } \\
    \hline 
    \textbf{Training} & \multicolumn{9}{p{18cm}|}{
        Optimizer: Adam (learning rate = 0.001). \newline
        Loss: CrossEntropyLoss. \newline
        Batch size: 256. \newline
        Epochs: 15.
    } \\
    \hline 
\end{tabular}
}
\end{table*}

\begin{theorem}
Let $N_i$ denote the signal length entering APS layer $i$. If
$N_i \bmod s = 0$ for all APS layers, then $f_\theta$ is invariant with
respect to a pure Doppler shift, i.e.,
\[
f_\theta(T(x[n])) = f_\theta(x[n]),
\]
where $T$ denotes a Doppler transformation corresponding to an integer
bin shift $m \in \mathbb{Z}$.
\label{thm:1}
\end{theorem}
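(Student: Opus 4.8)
The plan is to peel the network apart layer by layer and track what a pure Doppler shift does to each intermediate representation. By the system model, an integer bin shift $m$ acts on the post-FFT input as the circular shift operator $S_m$, where $(S_m X)[k] = X[(k-m)\bmod N]$. Since $S_m = S_1^{\,m}$, it suffices to establish the claim for a unit shift $S_1$ and then compose. I would show that every layer preceding the global average pool is \emph{shift-covariant}, meaning it maps a circularly shifted input to a circularly shifted output, and that the global average pool annihilates any such circular shift, yielding invariance of the final feature vector and hence of the softmax classification.

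First I would dispatch the easy layers. The complex convolution is assembled from real $\mathrm{Conv1D}$ blocks, and convolution commutes with circular shifts, so $\mathrm{ComplexConv}(S_1 X) = S_1\,\mathrm{ComplexConv}(X)$; here I treat the spectrum as periodic, consistent with the DFT, so that the shift wraps around coherently. The activation $\mathrm{ComplexReLU}$ is applied elementwise, so it commutes with any reindexing of positions and in particular with $S_1$. Hence each convolutional block (convolution followed by activation) is strictly shift-equivariant.

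The crux is the APS layer, and this is where I expect the main difficulty and where the hypothesis $N_i \bmod s = 0$ enters. Writing $N_i = sM$, the polyphase components are $x^{(i)}[n] = x[i+sn]$ for $n = 0,\dots,M-1$. I would compute the action of $S_1$ directly: the components of the shifted signal satisfy $\tilde{x}^{(i)} = x^{(i-1)}$ for $i \ge 1$, while $\tilde{x}^{(0)}$ is a circular shift of $x^{(s-1)}$ (the wrap-around term, where divisibility guarantees that the indices realign into full-length components with no truncation). Because a circular shift preserves the $\ell_2$ norm, the norm vector $\bigl(\|x^{(0)}\|_2,\dots,\|x^{(s-1)}\|_2\bigr)$ is merely cyclically permuted by $S_1$. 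Consequently the $\arg\max$ over the norm vector tracks the permutation, so the component selected after the shift equals the component selected before it up to a circular shift. This proves that $\mathrm{ComplexAPS1d}(S_1 x)$ is a circular shift of $\mathrm{ComplexAPS1d}(x)$, i.e. APS is shift-covariant; the condition $N_i \bmod s = 0$ is precisely what makes the polyphase split exact so that the norms permute without boundary leakage. The one degenerate case is a tie in the $\arg\max$, which I would resolve by adopting a fixed tie-breaking rule so that the selection remains consistent across shifts.

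Finally I would compose. Chaining the equivariance of the convolutional blocks with the covariance of the three APS layers shows that the feature map entering the global average pool under a Doppler shift equals some circular shift of the feature map without the shift. Global average pooling sums over all positions and normalizes, a quantity invariant under any reindexing of those positions and in particular under a circular shift; hence the pooled $16$-dimensional vector is identical in the two cases. Since the fully connected layer and softmax are deterministic functions of this vector, $f_\theta(T(x[n])) = f_\theta(x[n])$, which completes the argument. The only points requiring care are the tie-breaking in APS and the reconciliation of the convolution's finite, zero-padded support with the circular nature of the Doppler shift, which the pre-network padding of size $p$ is designed to accommodate.
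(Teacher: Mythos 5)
Your proposal is correct and follows essentially the same route as the paper's own proof: circular-shift equivariance of the complex convolution and elementwise activation, a polyphase-norm permutation argument for APS under the divisibility hypothesis $N_i \bmod s = 0$, and invariance of the global average pool under the resulting reindexing. If anything, your write-up is slightly sharper than the paper's Lemma~\ref{lem:aps}, since you note that the selected polyphase component after the shift agrees with the original only up to an internal circular shift (the wrap-around case), and you explicitly flag the $\arg\max$ tie-breaking and the zero-padded (non-circular) convolution boundary issue that the paper leaves implicit.
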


\section{Numerical Results}

We begin by evaluating Theorem \ref{thm:1} experimentally. That is, after training our new architecture solely on clean IQ samples, we divide the test set into two parts: clean data and data with frequency bin shifts applied to simulate a pure Doppler effect. In line with the analytical proof, Table \ref{tab:doppler_delta} shows that classification accuracy does not change after the pure Doppler shift. 

\begin{table}[h]
\centering
\caption{Doppler Invariant CNN (APS stride = 2): Signal Classification Accuracies Before and After Pure Doppler Shift (Shift 20 bins)}
\begin{tabular}{|l|c|c|c|}
\hline
\textbf{Signal Type} & \textbf{Before} & \textbf{After} & $|\Delta|$ \\
\hline
chirp         & 0.94 & 0.94 & 0 \\
hopping\_tone & 0.93 & 0.93 & 0 \\
tone          & 0.99 & 0.99 & 0 \\
noise         & 0.97 & 0.97 & 0 \\
bpsk          & 0.71 & 0.71 & 0 \\
qpsk          & 0.23 & 0.23 & 0 \\
8psk          & 0.20 & 0.20 & 0 \\
\hline
\textbf{Total} & -- & -- & {\textbf{0}} \\
\hline
\end{tabular}
\label{tab:doppler_delta}
\end{table}

In reality, however, a Doppler shift does not correspond to an exact bin shift. Recalling Theorem \ref{thm:1}, if $m \not \in \mathbb{Z}$ (i.e. the resolution in the frequency domain is too coarse to align with the Doppler-induced shift), there is spectral leakage, causing energy to be split between two bins. Clearly, our architecture cannot account for this since APS sampling is equivariant to a bin shift, not the values themselves in the bins changing. To remedy this problem, we add zero padding of length $p$ to the input on both sides in the time domain, yielding $p^*=128+2p>128$ frequency bins after taking the FFT. This modification enables the model to be invariant to finer grained Doppler shifts. 

Padding, however, changes the length of the frequency array $N$, and from Theorem \ref{thm:1}, we know it is advantageous to have $N \bmod s=0$. We want this condition to hold through every APS layer, motivating the \textit{padding condition}: we choose $p$ such that 

   \begin{itemize}
    \item $p^* \bmod s =0$
    \item $\left\lceil \frac{p^*}{s} \right\rceil \bmod s =0$
    \item $\left\lceil \frac{p^*}{s^2} \right\rceil \bmod s =0.$
\end{itemize}

\begin{table}[t]
\caption{Absolute Accuracy Change Totaled Across Signal Types}
\label{tab:abs_acc_change}
\centering
\begin{tabular}{|l|r|r|r|r|}
\hline
\textbf{Padding} & \textbf{Stride=2} & \textbf{Stride=3} & \textbf{Stride=4} & \textbf{Stride=5} \\
\hline
0   & 0.501 & 2.490 & 0.913 & 1.803 \\
10  & 0.180 & 0.903 & 0.631 & 1.573 \\
20  & 0.150 & 0.698 & 0.295 & 1.689 \\
30  & 1.119 & 0.956 & 0.277 & 0.658 \\
40  & \textbf{0.086} & 2.163 & 0.137 & 2.660 \\
50  & 0.985 & 1.876 & 2.000 & 2.918 \\
60  & 0.223 & 2.101 & 0.545 & 1.175 \\
70  & 1.320 & 0.797 & 0.236 & 0.644 \\
80  & \textbf{0.077} & 0.329 & 0.339 & 0.108 \\
90  & 1.252 & 1.814 & 1.968 & 1.298 \\
100 & 0.146 & 1.680 & 1.904 & 1.874 \\
110 & 1.240 & 1.301 & 0.648 & 0.212 \\
120 & \textbf{0.060} & 1.476 & 0.099 & 0.473 \\
130 & 1.185 & 1.540 & 2.392 & 0.263 \\
140 & 0.112 & 1.553 & 1.700 & 2.324 \\
150 & 1.134 & 1.579 & 0.798 & 2.569 \\
160 & 0.093 & 1.114 & 0.142 & 0.271 \\
170 & 1.175 & 1.374 & 1.919 & 0.251 \\
180 & \textbf{0.090} & 1.476 & 1.774 & 0.192 \\
190 & 1.116 & 1.163 & 0.134 & 2.641 \\
200 & \textbf{0.052} & 1.156 & 1.204 & 2.619 \\
210 & 1.047 & 1.483 & 1.415 & 1.734 \\
220 & 0.101 & 2.037 & 1.195 & 0.145 \\
230 & 1.136 & 1.215 & 1.349 & 0.365 \\
240 & 0.090 & 1.268 & 0.574 & 1.459 \\
250 & 1.115 & 1.234 & 1.533 & 2.224 \\
260 & \textbf{0.077} & \textbf{0.084} & 1.666 & 0.218 \\
270 & 1.052 & 1.577 & 1.168 & 0.624 \\
280 & \textbf{0.080} & 1.518 & \textbf{0.048} & 1.145 \\
290 & 1.103 & 1.293 & 1.549 & 1.770 \\
300 & \textbf{0.089} & 1.255 & 1.046 & 1.628 \\
\hline
\end{tabular}
\end{table}

We perform a parameter sweep over many possible padding and stride values, measuring the total absolute change in accuracy across all signal types (we show plots for each signal type in Appendix \ref{app:graphs}). In Table \ref{tab:abs_acc_change}, the ten lowest total accuracy changes are in bold. Notably, all entries in bold have padding values that follow the property introduced by the padding condition. Furthermore, most of the lowest changes in accuracy are from a stride of 2 and there are none from a stride of 5 because a higher value of $s$ is less likely to satisfy the aforementioned padding condition. The combination with the lowest accuracy change is $p_{opt}=280$ and $s_{opt}=4$. 

We use the vanilla CNN from the RADIOML22 \cite{rml22} paper as baseline. The model is similar to our architecture (Table \ref{tab:model_architecture}) in that it has three convolutional layers with ReLU activation, but it instead uses maximum pooling rather than APS sampling. The model also uses dropout and batch normalization. In Table \ref{tab:cnn_comparison}, we observe that after applying a random Doppler between 1 Hz and 5000 Hz to the testing set, there is a considerable drop in accuracy for the chirp, tone, bpsk, and qpsk signal types when using the vanilla CNN. With the Doppler invariant model, although there is a drop in base accuracy for the M-PSK signal types (likely due to training in the frequency domain), the accuracies are robust to the Doppler shifts as the changes in accuracy are negligible.

\begin{table}[t]
\centering
\caption{Comparison of Signal Classification Accuracies Before and After Random Doppler Shift (1 Hz to 5000 Hz)}
\label{tab:cnn_comparison}

\resizebox{\columnwidth}{!}{
\begin{tabular}{|l|c|c|c||c|c|c|}
\hline
{\textbf{Signal Type}} 
& \multicolumn{3}{c||}{\textbf{Vanilla}} 
& \multicolumn{3}{c|}{\textbf{Doppler Invariant (Stride=4, Pad=280)}} \\
\cline{2-7}
& \textbf{Before} & \textbf{After} & $|\Delta|$ 
& \textbf{Before} & \textbf{After} & $|\Delta|$ \\
\hline
chirp         & 0.95 & 0.04 & 0.91 & 0.893 & 0.912 & 0.019 \\
hopping\_tone & 0.90 & 0.90 & 0.00 & 0.925 & 0.927 & 0.002 \\
tone          & 0.96 & 0.60 & 0.36 & 0.983 & 0.983 & 0.000 \\
noise         & 0.96 & 0.97 & 0.01 & 0.956 & 0.958 & 0.002 \\
bpsk          & 0.89 & 0.05 & 0.84 & 0.577 & 0.585 & 0.008 \\
qpsk          & 0.38 & 0.24 & 0.14 & 0.082 & 0.088 & 0.006 \\
8psk          & 0.43 & 0.50 & 0.07 & 0.265 & 0.254 & 0.011 \\
\hline
\textbf{Total} & -- & -- & \textbf{2.33} & -- & -- & \textbf{0.048} \\
\hline
\end{tabular}
}
\end{table}

Despite these promising results, there are tradeoffs of the proposed technique. First, as Figure \ref{fig:runtime} demonstrates, increasing the padding increases the length of the signal and thus the time to compute a convolution. By contrast, Figure \ref{fig:runtime} also shows that increasing the stride decreases the amount of data downsampled, thus reducing the number of samples moving on to the next layer and the training time.  A very high stride early in the network, however, can contribute to information loss and inhibit the model from learning certain features. Thus, a $p_{opt}$ and $s_{opt}$ that are optimal from the perspective of invariance may not be optimal in certain  applications.

\begin{figure}[h!]
    \centering
    \includegraphics[width=0.5\textwidth]{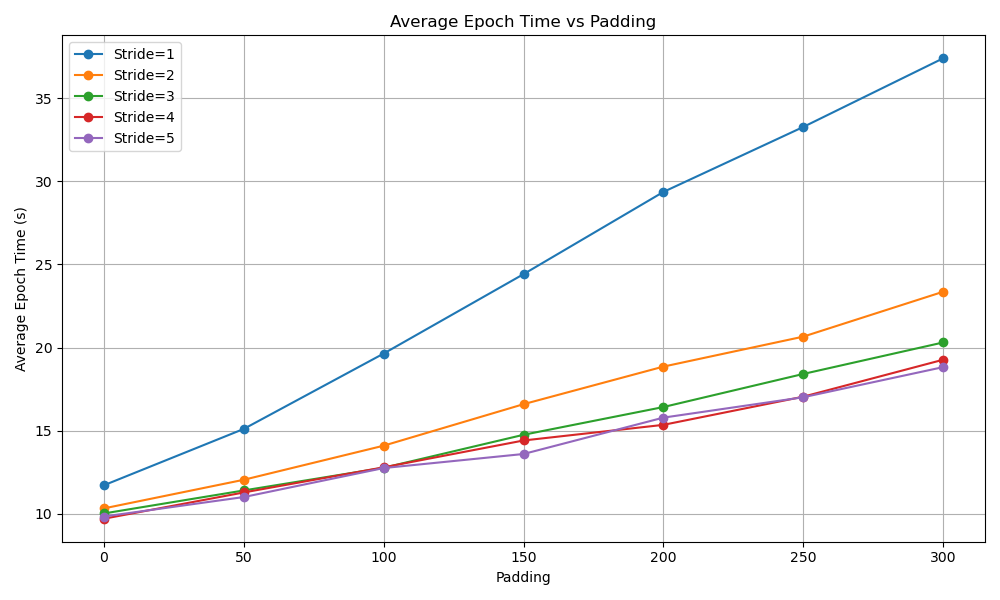}
    \caption{Average Epoch Time (s) vs Padding}
    \label{fig:runtime}
\end{figure}

\section{Conclusion}

This paper introduced a Doppler invariant CNN for signal classification. Training in the frequency domain and the use of APS created a provably frequency bin shift invariant model. When faced with fractional frequency bin shifts, our analysis showed that an appropriately chosen padding and stride value still allowed for near-perfect invariance. 

The main limitation of this work is that while there is no change in accuracy following a Doppler shift, training in the frequency domain lowers the classification accuracy overall, especially for the M-PSK modulations. Future work should construct a frequency-domain architecture that matches the performance in the time domain. 

Similar to prior studies using deep learning for signal classification, it is essential that the proposed model is tested in real-world environments and not just with synthetic data. We hypothesize that constructing invariant architectures will generalize better in such environments as opposed to data augmentation approaches. Future work should construct architectures invariant to other real-world effects such as time shifts, jamming signals, and fading. Whether one can weave these invariance properties together in the same architecture such that one invariance does not undermine another motivates future research.

\appendices
\section{Proof of Theorem \ref{thm:1}}\label{app:proof}
\begin{lemma}
Let $N$ denote the length of the frequency array. If $N \bmod s = 0$, then
\[
\mathrm{APS}(T(X[k])) = \mathrm{APS}(X[k]),
\]
where $T$ denotes a circular bin shift by an integer $m$.
\label{lem:aps}
\end{lemma}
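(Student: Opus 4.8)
The plan is to reduce the claim to a statement about how a circular bin shift acts on the polyphase decomposition used inside Algorithm~\ref{alg:complex_apspool}. I would first write $N = sL$; this is exactly where the hypothesis $N \bmod s = 0$ enters, since it guarantees $L \in \mathbb{Z}$ and that each of the $s$ polyphase components $x^{(i)}$ (the subsamples of $X$ at indices congruent to $i$ modulo $s$) has the same length $L$. The index set $\{0,1,\dots,N-1\}$ then partitions cleanly into the $s$ residue classes modulo $s$, and $x^{(i)}$ collects the values of $X$ on class $i$. I would record the only two quantities the algorithm actually consumes: the norms $n^{(i)} = \|x^{(i)}\|_2$ and the $\arg\max$ over $i$ of these norms.

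Next I would compute the polyphase components of the shifted signal $X_d[k] = X[(k-m)\bmod N]$. The central observation is that because $N \equiv 0 \pmod s$, reduction modulo $N$ preserves residues modulo $s$; hence the entry $X_d[i + \ell s]$ equals $X$ evaluated at an index lying in residue class $i' = (i-m)\bmod s$. Carrying this out for $\ell = 0,\dots,L-1$ shows that, as an unordered collection of values, $x_d^{(i)}$ is exactly $x^{(i')}$, with the shift merely cyclically permuting the $L$ samples within that component. I would isolate this as the key structural fact: a circular bin shift by $m$ cyclically permutes the polyphase components, sending component $i$ to component $(i-m)\bmod s$, without altering the multiset of values each component contains.

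From here the conclusion follows quickly. Since the $\ell_2$ norm is invariant under reordering of entries, $n_d^{(i)} = n^{((i-m)\bmod s)}$, so the vector of norms of the shifted signal is a cyclic permutation of the original norm vector. The maximal norm is therefore unchanged, and the $\arg\max$ selects the component whose values coincide with those selected for the unshifted input; thus $\mathrm{APS}(T(X[k]))$ and $\mathrm{APS}(X[k])$ return the same collection of samples, proving the lemma. To chain this into Theorem~\ref{thm:1}, I would note that the per-component cyclic reindexing is precisely the shift-equivariance of APS, which the subsequent complex convolutions preserve and the final global average pool collapses into exact invariance.

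The step I expect to be the main obstacle is making the residue-preservation argument fully rigorous, i.e.\ proving carefully that $x_d^{(i)}$ and $x^{(i')}$ agree as multisets; this is the only place the divisibility hypothesis is genuinely used, and it is easy to get wrong when $N \bmod s \neq 0$, since then the classes have unequal lengths and the wraparound mixes residue classes, breaking the clean permutation. A secondary point to handle is the $\arg\max$ tie-breaking: I would fix a deterministic rule (e.g.\ smallest index) and verify that it is itself consistent under the cyclic relabeling, so that ties among equal norms cannot silently break the invariance.
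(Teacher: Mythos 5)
Your proposal is correct and follows essentially the same argument as the paper: both show that when $s \mid N$ a circular bin shift cyclically permutes the polyphase components without changing their contents, hence permutes the $\ell_2$ norms, so the $\arg\max$ selects the same component (up to reindexing), with invariance completed by the permutation-invariant global average pool. The only differences are cosmetic---you handle a shift by $m$ directly via residue arithmetic where the paper shifts by one and inducts, and you explicitly flag the $\arg\max$ tie-breaking subtlety that the paper's proof silently assumes away by taking the maximizer to be unique.
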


\begin{proof}
Let $P = \{0,\ldots,s-1\}$ index the polyphase components, and let
$f:X[k]\to P$ map each array element to its polyphase component. Since
$N \bmod s = 0$, the polyphase pattern repeats exactly around the
circle, so
\[
f(X[(i+1)\bmod N]) = (f(X[i])+1)\bmod s
\]
for all $i$. Thus, a circular right shift by one rotates the assignment
of indices to offsets: the set of elements in each polyphase component
is unchanged, but now appears at offset $(j+1)\bmod s$ instead of $j$.
Algorithm~\ref{alg:complex_apspool} computes the $\ell_2$ norm $n^{(j)}$
of each polyphase component and selects the index
$\operatorname*{arg\,max}_j n^{(j)}$. A circular shift merely permutes
the list $\{n^{(0)},\ldots,n^{(s-1)}\}$, so the same polyphase component
attains the maximum norm before and after the shift (though possibly at
a different offset index). Repeating this argument for a shift by $m$ bins via
induction proves the lemma.
\end{proof}

\begin{proof}[Proof of Theorem \ref{thm:1}]
By Lemma~\ref{lem:aps}, under the condition $N_i \bmod s = 0$, each APS layer selects
the same polyphase component before and after the shift. We already know that convolutional layers are shift equivariant. Thus, the
sets entering the global average pooling layer for $T(x[n])$ and $x[n]$ are identical. The global average pooling layer is
invariant under permutations and produces the same value for $T(x[n])$ and $x[n]$. Therefore,  $f_\theta(T(x)) = f_\theta(x)$, establishing invariance
to pure Doppler shifts.

\end{proof}

\section{Additional Graphs}
\label{app:graphs}
\begin{figure}[h!]
    \centering
    \includegraphics[width=0.5\textwidth]{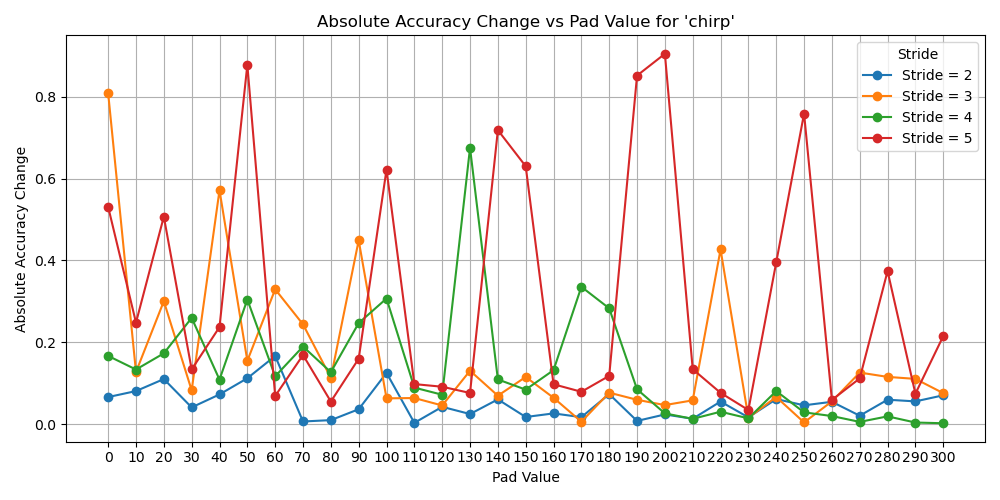}
    \caption{Absolute Accuracy Change (Chirp) vs Padding}
    
\end{figure}

\begin{figure}[h!]
    \centering
    \includegraphics[width=0.5\textwidth]{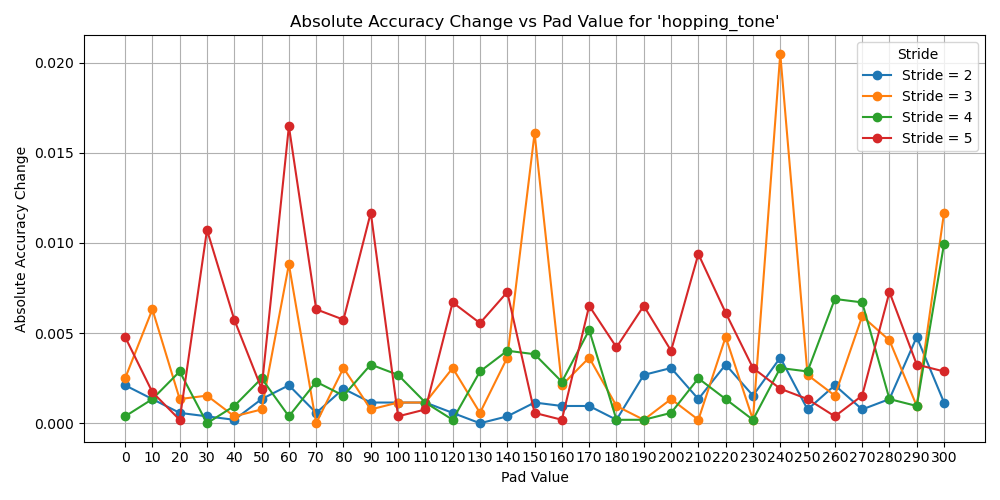}
    \caption{Absolute Accuracy Change (Hopping Tone) vs Padding}
    
\end{figure}

\begin{figure}[h!]
    \centering
    \includegraphics[width=0.5\textwidth]{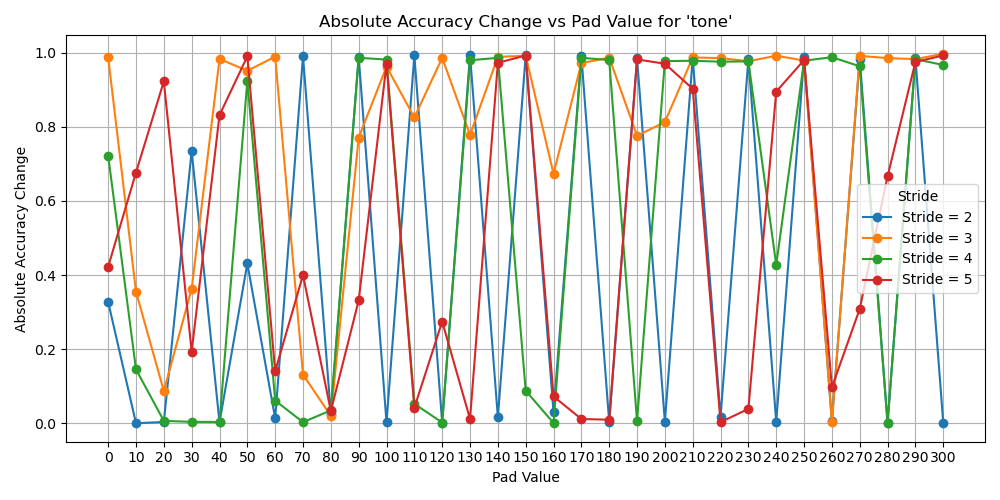}
    \caption{Absolute Accuracy Change (Tone) vs Padding}
    
\end{figure}

\begin{figure}[h!]
    \centering
    \includegraphics[width=0.5\textwidth]{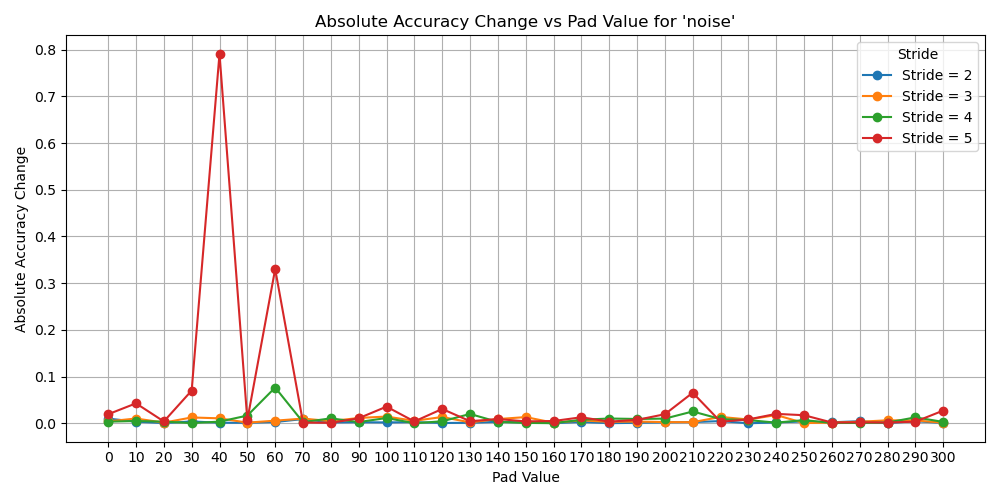}
    \caption{Absolute Accuracy Change (Noise) vs Padding}
    
\end{figure}

\begin{figure}[h!]
    \centering
    \includegraphics[width=0.5\textwidth]{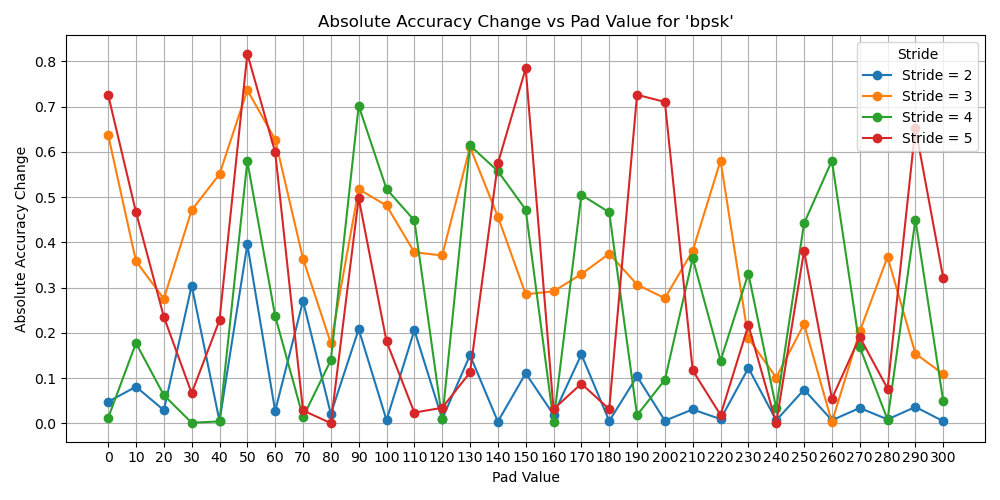}
    \caption{Absolute Accuracy Change (BPSK) vs Padding}
    
\end{figure}

\begin{figure}[h!]
    \centering
    \includegraphics[width=0.5\textwidth]{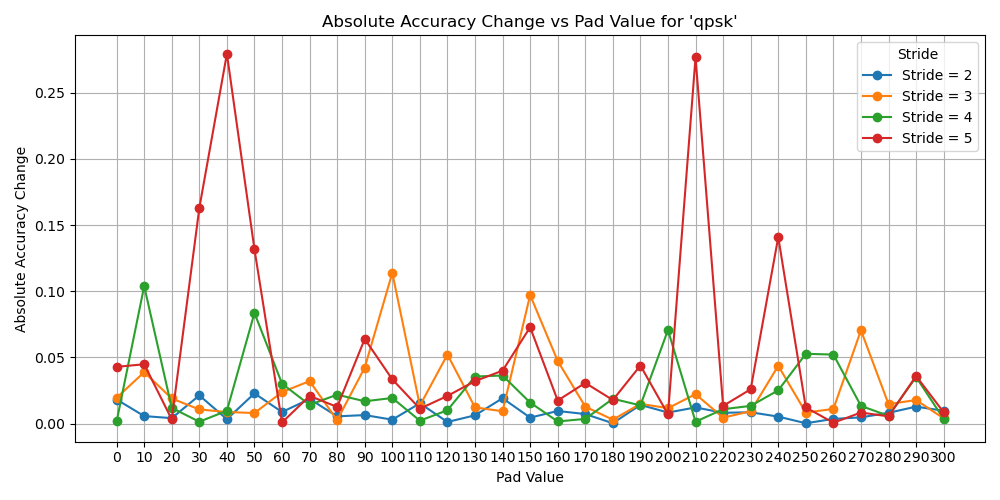}
    \caption{Absolute Accuracy Change (QPSK) vs Padding}
    
\end{figure}

\begin{figure}[h!]
    \centering
    \includegraphics[width=0.5\textwidth]{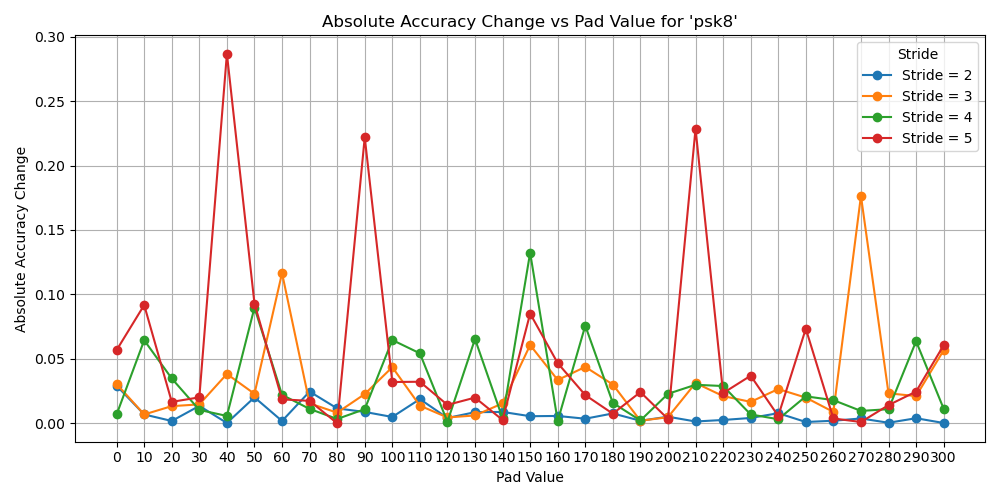}
    \caption{Absolute Accuracy Change (8PSK) vs Padding}
    
\end{figure}

\newpage

 




\vfill


\begin{thebibliography}{1}
\bibliographystyle{IEEEtran}

\bibitem{rml22}
V.~Sathyanarayanan, P.~Gerstoft, and A.~El~Gamal, ``RML22: Realistic dataset
generation for wireless modulation classification,'' \emph{IEEE Trans. Wireless
Commun.}, vol.~22, no.~11, pp.~7663--7675, 2023.

\bibitem{surveyamc}
O.~A. Dobre, A.~Abdi, Y.~Bar-Ness, and W.~Su, ``Survey of automatic modulation
classification techniques: classical approaches and new trends,'' \emph{IET
Commun.}, vol.~1, no.~2, pp.~137--156, 2007.

\bibitem{overtheair}
T.~J. O'Shea, T.~Roy, and T.~C. Clancy, ``Over-the-air deep learning based
radio signal classification,'' \emph{IEEE J. Sel. Topics Signal Process.},
vol.~12, no.~1, pp.~168--179, 2018.

\bibitem{4thorder}
M.~Pedzisz and A.~Mansour, ``Automatic modulation recognition of MPSK signals using
constellation rotation and its 4th order cumulant,'' \emph{Digit. Signal
Process.}, vol.~15, no.~3, pp.~295--304, 2005.

\bibitem{revolution}
H.~Zhang, F.~Zhou, H.~Du, Q.~Wu, and C.~Yuen, ``Revolution of wireless signal
recognition for 6G: recent advances, challenges and future directions,'' arXiv:2503.08091, 2025.

\bibitem{radiomldata}
T.~O'Shea, ``RadioML methodology source code,'' 2022. [Online]. Available:
\url{https://github.com/radioML/dataset}.

\bibitem{alexnet}
A.~Krizhevsky, I.~Sutskever, and G.~E. Hinton, ``ImageNet classification with
deep convolutional neural networks,'' in \emph{Adv. Neural Inf. Process. Syst.},
vol.~25, 2012.

\bibitem{cohen2016group}
T.~Cohen and M.~Welling, ``Group equivariant convolutional networks,'' in
\emph{Proc. 33rd Int. Conf. Mach. Learn. (ICML)}, New York, NY, USA, 2016,
pp.~2990--2999.

\bibitem{fusion}
H.~Yin and J.~Diao, ``Signal automatic modulation based on AMC neural network
fusion,'' \emph{PLOS ONE}, vol.~19, no.~6, pp.~e0304531, 2024.

\bibitem{lecun}
Y.~LeCun, L.~Bottou, Y.~Bengio, and P.~Haffner, ``Gradient-based learning
applied to document recognition,'' \emph{Proc. IEEE}, vol.~86, no.~11,
pp.~2278--2324, 1998.

\bibitem{survey}
J.~Bassey, L.~Qian, and X.~Li, ``A survey of complex-valued neural networks,''
arXiv:2101.12249, 2021.

\bibitem{deepcomplex}
C.~Trabelsi \emph{et al.}, ``Deep complex networks,'' arXiv:1705.09792, 2018.

\bibitem{complexconv}
N.~Guberman, ``On complex valued convolutional neural networks,''
arXiv:1602.09046, 2016.

\bibitem{complexrelu}
C.-A. Popa, ``Complex-valued convolutional neural networks for real-valued
image classification,'' in \emph{Proc. Int. Joint Conf. Neural Netw. (IJCNN)},
Anchorage, AK, USA, 2017, pp.~816--822.

\bibitem{cvldsp}
J.~W. Smith, ``Complex-valued neural networks for data-driven signal processing
and signal understanding,'' arXiv:2309.07948, 2023.

\bibitem{chaman}
A.~Chaman and I.~Dokmanić, ``Truly shift-invariant convolutional neural
networks,'' arXiv:2011.14214, 2021.

\bibitem{zhang}
R.~Zhang, ``Making convolutional networks shift-invariant again,''
arXiv:1904.11486, 2019.

\end{thebibliography}
\end{document}